%% file: access.tex
\documentclass{ieeeaccess}
\usepackage{cite}
\usepackage{algorithmic}
\usepackage{graphicx}
\usepackage{textcomp}
\usepackage{cite}
\usepackage{amsmath,amssymb,amsfonts,amsthm}
\usepackage{algorithmic}
\usepackage{graphicx,color}
\usepackage[caption=false]{subfig}

\newcommand{\px}{\mathsf{x}}
\newcommand{\py}{\mathsf{y}}

\input{testmathcommand.tex}

\newcommand{\R}{\mathbb{R}}

\usepackage{url}
\usepackage{hyperref}
\newcommand{\rEarth}{R_{\oplus}}

\usepackage{bm}
\makeatletter
\AtBeginDocument{\DeclareMathVersion{bold}
\SetSymbolFont{operators}{bold}{T1}{times}{b}{n}
\SetSymbolFont{NewLetters}{bold}{T1}{times}{b}{it}
\SetMathAlphabet{\mathrm}{bold}{T1}{times}{b}{n}
\SetMathAlphabet{\mathit}{bold}{T1}{times}{b}{it}
\SetMathAlphabet{\mathbf}{bold}{T1}{times}{b}{n}
\SetMathAlphabet{\mathtt}{bold}{OT1}{pcr}{b}{n}
\SetSymbolFont{symbols}{bold}{OMS}{cmsy}{b}{n}
\renewcommand\boldmath{\@nomath\boldmath\mathversion{bold}}}
\makeatother

\def\BibTeX{{\rm B\kern-.05em{\sc i\kern-.025em b}\kern-.08em
    T\kern-.1667em\lower.7ex\hbox{E}\kern-.125emX}}

\begin{document}
\doi{}

\title{Spatial and Temporal Correlation of Interference in a Narrow Multibeam LEO Satellite Random Access Network}
\author{\uppercase{Ilari Angervuori}\authorrefmark{1}, \IEEEmembership{Student Member, IEEE},
  \uppercase{Abid Afridi}\authorrefmark{1}, \IEEEmembership{Student Member, IEEE}, AND \uppercase{Risto Wichman},
  \authorrefmark{1},
  \IEEEmembership{Senior Member, IEEE}}

\address[1]{School of Electrical Engineering, Aalto University, Espoo, 02150, Finland \{ilari.angervuori;abid.afridi;risto.wichman\}@aalto.fi}

\markboth
{Ilari Angervuori \headeretal}
{Ilari Angervuori \headeretal}

\corresp{Corresponding author: Ilari Angervuori (e-mail: ilari.angervuori@aalto.fi).}

\begin{abstract}
  Interference is a limiting factor in the emerging dense low Earth orbit (LEO) networks. In the LEO network, the interference is spatially and temporally correlated. At narrow-beam LEO base stations (BSs), spatial interference can vary significantly, and multipath fading introduces temporal variation. While developing novel stochastic geometry analysis in a multibeam scenario, we explore spatio-temporal interference correlation in the LEO uplink. We derive a closed-form expression for the spatio-temporal interference correlation coefficient. As an application of the analysis, we show that the signal-to-interference ratio (SIR) entails significant spatial clustering. In this regard, we demonstrate that an appropriately designed grant-free random access scheme, particularly slotted ALOHA, can mitigate spatial SIR clustering over the beams while preserving average throughput. Furthermore, we propose a novel gamma distribution model for the interference power distribution and a Lomax distribution model for the SIR.
\end{abstract}

\begin{keywords}
  Coverage analysis, LEO satellite networks, random access, stochastic geometry
\end{keywords}

\titlepgskip=-21pt

\maketitle

\section{INTRODUCTION}

The rapid proliferation of low Earth orbit (LEO) satellite constellations has transformed global connectivity. This has enabled low-latency communications for diverse applications, particularly in narrowband communications. However, dense LEO networks introduce significant interference challenges, especially in the uplink. Interference varies widely due to the mobility of LEO satellites, leading to highly dynamic channel conditions \cite{talgat2024stochastic}. Furthermore, beamforming and multibeam architectures are likely technologies in near-future LEO networks \cite{s21144877}. With a narrow-beamed LEO BS antenna, the channel variation is particularly significant \cite{10909705}.

Using stochastic geometry, multiple attempts have been made to characterize interference at a LEO base station (BS) from randomly distributed terrestrial terminals in the LEO uplink \cite{10909705,9347980, 9684552, modelinguplink, 9918046, 9511625, globecom2022,11161458}. The challenge is the spherical Earth geometry, which, without appropriate simplifications, yields undesirable, complicated expressions that lack insight. In particular, the expression for the Laplace transform of the interference RV, which is usually needed when evaluating other performance metrics such as signal-to-interference ratio (SIR) or throughput, becomes impractically complicated, containing unevaluated integrals and/or infinite sums, and hence deriving closed-form or analytical expressions even for the first two moments is either complex or strictly impossible.~\footnote{We use the terms ``closed-form'' and ``analytical'' in their strict sense. Roughly, closed-form expressions do not contain unevaluated integrals, infinite sums, or special functions. Analytic form can contain these expressions. However, the unevaluated integrals must be expressible with analytic special functions.} To address this issue, we propose a tractable interference modeling framework that provides \textit{explicit} (i.e., closed-form or analytical) solutions for the first two moments of the interference, provided that the signal attenuation distribution (such as fast-fading) admits such explicit expressions. We also formulate a probability generating functional (PGFL) (from which the Laplace transform directly follows), whose expression complexity depends only on the complexity of the cumulative distribution function (CDF) of the random signal attenuation of a typical terrestrial terminal. This provides more insight and lays a tractable framework for further performance metric evaluations, such as SIR or average link throughput. We propose an approximate gamma distribution interference model and a Lomax SIR model, which we show to be applicable to Rician fading.

To address the challenges posed by control signaling overhead, grant-free random access schemes have been proposed for LEO uplink, offering simplicity and scalability while aiming to optimize link performance \cite{Mandawaria22}. Grant-free access is a low-complexity scheme because it does not require scheduling requests and preliminary resource allocation, hence minimizing signaling overhead. Of these schemes, ALOHA is the simplest, which we will model and analyze in this work; however, it also has the most frequent packet collisions. The ALOHA network model provides pessimistic predictions of throughput compared to more sophisticated schemes.


 Stochastic geometry has emerged as a powerful tool for analyzing spatio-temporal interference correlations, providing tractable models for coverage probability, rate, and interference distribution. By modeling the transmitters as random point processes, such as the homogeneous Poisson point process (PPP), one can derive mathematical expressions for performance metrics in terrestrial networks that capture both long-term spatial dependencies and short-term temporal fluctuations induced by fading \cite{lu2021stochastic,5282357}. Building on these works, we lay a tractable framework for spatial and temporal interference analysis, including a closed-form expression for the Pearson correlation coefficient. We study the average throughput in a LEO random access uplink. We show that the grant-free random access, particularly slotted ALOHA, can preserve optimal average throughput while ensuring consistent link performance, leading to more resilient LEO networks.

To the best of our knowledge, this is the first theoretical characterization of spatio-temporal interference correlation functions and the study of grant-free random multi-access in LEO networks within a stochastic geometry framework. \textcolor{black}{The correlation functions leverage an additional spatial and temporal structure in the network models, which is crucial for comprehensive analysis and optimization of the medium access control (MAC) schemes.}

\subsection{Organization of the paper}

\textcolor{black}{In Section \ref{sec:systemmodel}, we formulate and compare the planar and spherical system models and introduce the fading and antenna pattern models. Section \ref{sec:analysis} formulates the analysis and provides expressions for the mean and variance of the interference. In Section \ref{sec:spatiotemporal}, we formulate a closed-form expression for the spatio-temporal Pearson correlation coefficient of the interference. In Section \ref{sec:tempcorlinkout}, we study the SIR distribution and its temporal correlation properties. The key insights of the results are summarized in Section \ref{sec:conclusions}.}

\section{SYSTEM MODELS}
\label{sec:systemmodel}

\textcolor{black}{We present two system models in the subsections:
\begin{enumerate}
\item[\ref{sec:planarsystemmodel}] A simplified planar system model for the mathematical analysis.
\item[\ref{sec:sphericalsystemmodel}] A more realistic spherical system model used in Monte Carlo simulations.
\end{enumerate}
The planar system model analysis approximates the spherical model.}


\subsection{PLANAR SYSTEM MODEL}
\label{sec:planarsystemmodel}
\begin{figure}[ht]%
  \centering
  \subfloat[\textcolor{black}{A LEO BS and the beam footprints, the typical beam pointed at the origin,} and the UEs in the planar model. ]{{\includegraphics[width=\linewidth]{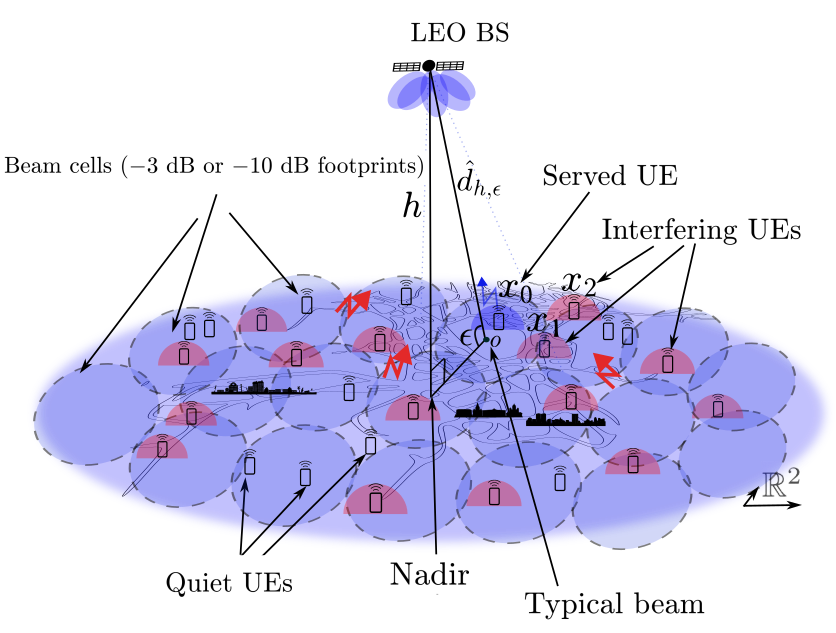}}
    \label{fig:UEsontheplane}
  } 
  \qquad
  \subfloat[\textcolor{black}{A LEO BS} as seen from the side. The transmitters are projected into the line $(0, \infty)$ according to their norm.]{{\includegraphics[width=\linewidth]{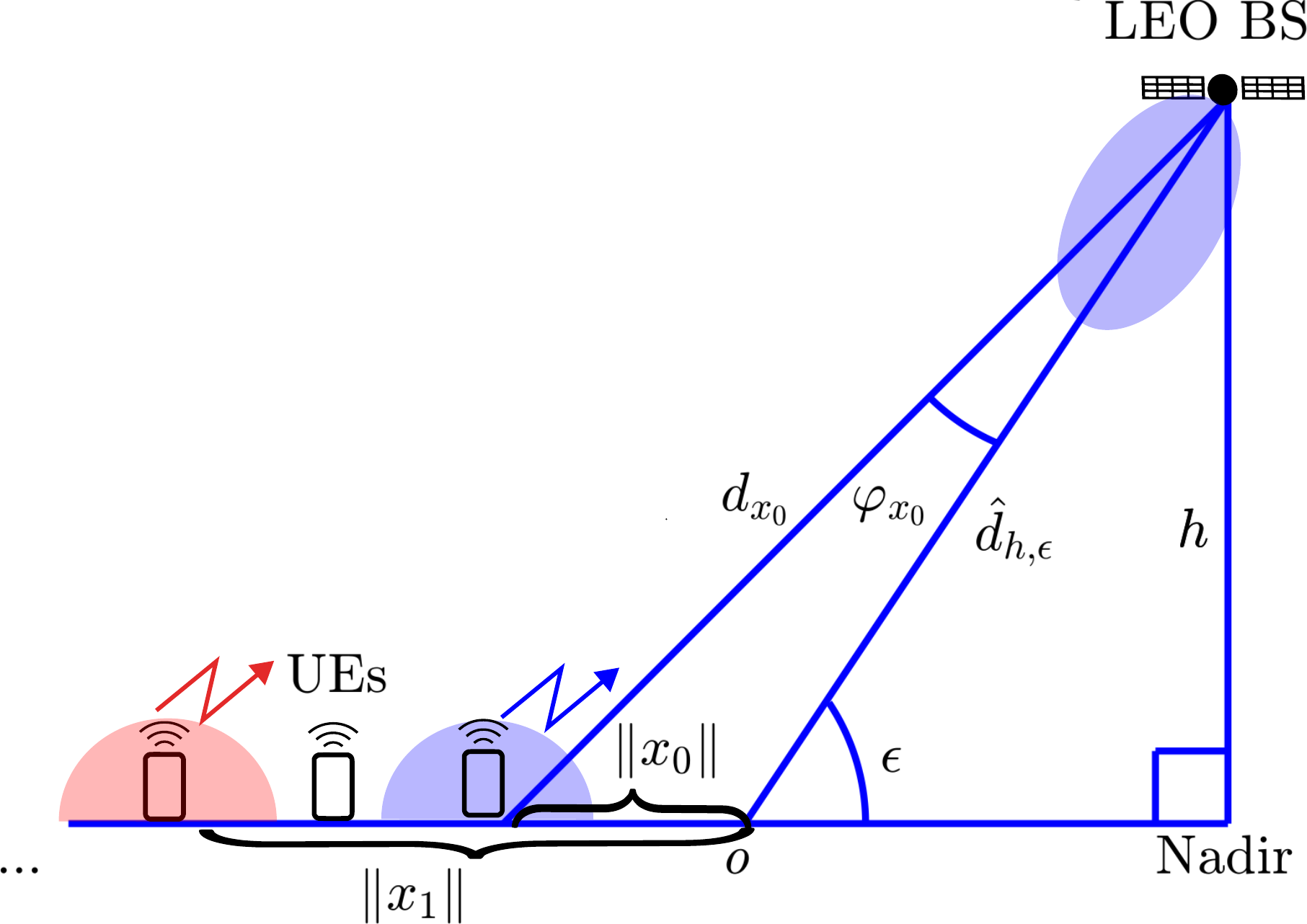}}     \label{fig:systemmodel}} 
  \caption{The simplified narrow-beam LEO uplink system model. The \textcolor{black}{typical LEO BS beam} antenna boresight is oriented towards \textit{o}, the focus point of the elliptical footprint. The omnidirectionally transmitting UEs $\{x_i\}$ are located according to a homogeneous PPP on the plane. The nearest transmitter, $x_0$, is the served UE.}
\end{figure}

A LEO narrowband multibeam uplink is considered with narrow Gaussian-beamed LEO base stations (BSs) in a \textcolor{black}{uniform single-altitude constellation from the perspective of a LEO BS serving the nearest user equipments (UE) to the beam center in each beam. The beams are fixed, and no beamforming techniques are used to track the served UEs while \textcolor{black}{the LEO BSs move} at orbital speed; this models a beam-hopping scenario.} In the analysis, we use a planar Earth model with the origin at $\textit{o} \triangleq (0,0) \in \R^2$, as depicted in Figs. \ref{fig:UEsontheplane} and \ref{fig:systemmodel}. \textcolor{black}{In the sketches, $\varphi_{x_0}$ is the angle between the nearest UE $\px_0$ and the antenna boresight directed at the origin, which represents \textcolor{black}{a LEO BS beam center}. $h$ is the altitude of \textcolor{black}{the LEO BSs}, and $\|\px_1\|$ and $\|\px_0\|$ are the Euclidean distances from the origin. These parameters are highly exaggerated in the figures. The elevation angle is $\epsilon$, and the distance of the LEO BS from the origin is $\hat{d}_{h,\epsilon}$; these parameters correspond to their counterparts in the corresponding spherical Earth model (see Appendix \ref{app:truegeom}).} The planar model is accurate for narrow beams because the effective beam footprint radius is much smaller than the Earth radius; hence, the Earth surface can be locally approximated as tangent at the sub-satellite point.


The LEO BS beams form a uniform layout, whereas the UEs form a homogeneous PPP $\Phi$ of density $\lambda$ on the plane $\R^2$ and have omnidirectional antennas. The beam centers are at most at the elevation angle $\epsilon = 35^{\circ}$ w.r.t the satellite. The block lengths are on the time scale of fading coherence (i.e., block fading). A \textit{use period} consists of multiple such blocks, but is brief enough so that \textcolor{black}{the LEO BSs move insignificantly} during the period. We consider a slotted ALOHA random access network in which each UE independently transmits with probability $p$ in each block, and otherwise is quiet. Because we assume narrow Gaussian beams that decay rapidly, the relevant UEs are spread over a relatively small area. Hence, the distance between a LEO BS and each transmitter is approximated as equal; $d_{\px}  \approx \hat{d}_{h,\epsilon}$ for all $\px \in \Phi$. Since the spatial path losses cancel in the SIR and the correlation coefficient, without loss of generality, $\hat{d}_{h,\epsilon}$ is normalized for each beam. Similarly, the received signal powers are normalized (farther beams increase their gain accordingly). Because of the narrow beams, all beam footprints are practically circular. To control interference, the other beams are adjusted so that footprints (or cells) are of equal size. \textcolor{black}{Furthermore, we will show in Corollary \ref{thm:densityoftheGP} that the theoretical interference statistics depend only on the cell size (equivalently, on the average number of UEs inside a beam footprint). Hence, if we keep the beam footprint sizes equal, it is warranted to talk about the ``typical beam'', which is directed at the origin, representing the statistics of any beam.}
\begin{rem}
  Unless otherwise stated, we refer to the term ``a cell'' as the $-3$ dB footprint of the beam. However, the cells do not have definite boundaries in the strict sense, and the served UE can be in the $-3$ dB footprint or, for example, in the $-10$ dB cell (i.e., the  main lobe footprint) outside the $-3$ dB cell, or even outside the main lobe. 
  \end{rem}


         \begin{table}[ht]
           \begin{center}
             \caption{Glossary of principal symbols} \label{tab:glossary}
             \begin{tabular}{ |c | p{5.9cm}|  }
               \hline
               \textbf{Symbol}& \textbf{Explanation}
               \\
               \hline
               $\Phi \subset \R^2$ & Homogeneous PPP on the plane. \\
               $\lambda$ &  Density of $\Phi$. \\
               $p \in (0,1]$ & Transmission probability. \\
                 $H_{\px}=H_{\px}(k)$ &  Fading gain of mean $1$ of a transmitter $\px \in \Phi$.\\
                 $G(\cdot)$ & LEO BS antenna gain.\\
                 $\mathcal{G},\mathcal{G}_1$ &Gain process (GP); latter denotation if $H_{\px}\equiv 1$. \\
                 $\Phi(k) \subset \Phi$ & Homogeneous PPP of the user equipments (UEs) scheduled at time $k$. \\
                 $\mathcal{G}(k) \subset \mathcal{G} $ &The gain process of the users $\Phi(k)$. \\
                 $\px$ & According to the context, a point in $\Phi$, $\mathcal{G}_1$ or $\mathcal{G}$.\\
                 $\|\px\|$ & The distance of $\px \in \Phi$ from the origin $\textit{o} =(0,0) \in \R^2$.  \\
                 $\px_0=\px_0(k)$ & The nearest UE to the \textcolor{black}{typical beam center} in $\Phi(k)$.  \\
                 $x,u,v$ & A point location in $\R^2$.  \\
                 $h$ & Altitude of the LEO BSs. \\
                 $\epsilon$ & Elevation angle of the LEO BSs w.r.t. the beam center. \\
                 $\varphi_{\text{RX}}=\varphi_{\text{RX}}(\epsilon)$ & Half-Width of the LEO BS $-3$ dB beam. \\
                 $\overline{\varphi^{}}_{\text{RX}}$ & Half-Width of the nadir-pointing LEO BS $-3$ dB beam. \\                 
                 $D_{h,\epsilon}$ & Scaling constant of $\|x\|$; $D_{h,\epsilon}= \sin^2(\epsilon)/h$.\\
                 $\kappa$ & A parameter that reflects the approximate mean number of UEs inside a LEO BS $-3$ dB beam footprint. \\
                 ${\tilde{\kappa}}$ & GP parameter; $\kappa/\log(2)$.\\
                 $\theta$ & SIR threshold for a successful transmission.\\
                 $I_{\textit{o}}=I_{\textit{o}}(k)$ & Total interference at the typical beam at time $k$.\\
                 $W$  & sidelobe interference constant. \\
                 \hline
             \end{tabular}             
           \end{center}
         \end{table}

         \textcolor{black}{Let $\varphi^{}_{\text{RX}}$ be the $-3$ dB beam half-width. For convenience, we will often refer to the ``beam half-width'' simply as ``beam width'' in the paper. We set $\varphi^{}_{\text{RX}}= \overline{\varphi}^{}_{\text{RX}} \triangleq 0.028$ $(= 1.6^{\circ})$ for the nadir-pointing beam, i.e., the beam that points to the sub-satellite point with the elevation $\epsilon =90^{\circ}$. The nadir-pointing beam width follows the ITU-R reference radiation pattern recommended by the International Telecommunication Union (ITU) \cite[ITU-R LEO reference radiation patterns]{ITURS1528}, of which the main lobe is indeed near-Gaussian \cite{10909705}[Fig. 2].}

         The Gaussian antenna pattern is at the kernel of the analysis approximates a Gaussian antenna pattern, is defined at the Euclidean distance $\|x\|$ as
         \begin{equation}
           \label{eq:antennagain}
           G(\|x\|)\overset{}{\triangleq} 2^{-(D_{h,\epsilon}\|x\|)^2/\varphi^2_{\textup{RX}}} \overset{(a)}{\approx}  2^{-\varphi_x^2/\varphi^2_{\textup{RX}}},
         \end{equation}
         where corresponds $\varphi_x$ to the angle between the antenna boresight and the location $x \in \R^2$. The scaling constant \begin{equation}D_{h,\epsilon} \triangleq \sin^2(\epsilon)/h \approx \varphi_{x}/\|x\|\end{equation} is the first-order coefficient of the Taylor expansion of $\varphi_{x}$, leveraging the approximation \eqref{eq:antennagain} (a). A detailed description of this approximation is provided in Appendix \ref{app:constantD}.

         Let $\Phi \subset \R^2$ be a homogeneous PPP representing the UEs on the Earth surface. Let $\{H_{\px}\}_{\px \in \Phi}$ be i.i.d. RVs with finite variance, representing power fading in this work. The gain process (GP) is defined as the Gaussian projection process
  \begin{equation}
    \label{eq:GP}
    \mathcal{G} \triangleq  \left\{H_{\px} G(\|\px\|):\px \in \Phi \right\}.
  \end{equation}
 Furthermore, denote $\mathcal{G}_1\triangleq\{G(\|x\|):\px \in \Phi \}$.

 Let $\Phi(k) \subset \Phi \subset \R^2$ be the set of transmitting UEs at time $k$. The corresponding GP is denoted as $\mathcal{G}(k) \subset \mathcal{G}$. 

 The total (``total'' because none of the UE signals are canceled) interference at time instant $k$ at the beam location $z \in \R^2$ is
\begin{align}
  I_z=I_z(k) &\triangleq \sum_{\px \in \Phi(k)}H_{\px}(k) G(\|\px-z\|) + W \\
  &= \sum_{\px \in \Phi}\mathbf{1}(\px \in \Phi(k))H_{\px}(k) G(\|\px-z\|) + W \nonumber,
\end{align}
where $\mathbf{1}(\cdot)$ is the indicator function modeling transmission with probability $p$, \textcolor{black}{and $W\geq 0$ is a constant power, representing the sidelobe interference component in this paper, which is determined empirically}. Each $H_{\px}(k)$ is assumed to be ergodic, and because of the i.i.d. property, we can refer to a typical fading gain (or, more generally, a typical signal) $H$. The homogeneous PPP is translation invariant, and the interference at each point is identically distributed; hence, \textcolor{black}{the interference at $\textit{o}$ provides all necessary statistics for the typical beam:}
\begin{align}
  \label{eq:totalinterferenceat0}
  I_{\textit{o}}=  I_{\textit{o}}(k) &= \sum_{\px \in \Phi(k)}H_{\px}(k) G(\|\px\|) +W \nonumber \\
  &= \sum_{\px \in \mathcal{G}_1(k)}H_{\px}(k) \px + W= \sum_{\px \in \mathcal{G}(k)}\px + W.
\end{align}
Since the typical LEO BS beam is directed at $\textit{o}$, the nearest (and served) UE is formally defined as
\begin{equation}
  \label{eq:x0def}
  \px_0=\px_0(k) \triangleq \arg \min \{ \px \in \Phi(k): \|\px\|\}.
\end{equation}
Further, denote the corresponding point in $\mathcal{G}_1$ by $\px_0$ (i.e., the largest gain excluding the fading gain).

\subsection{SPHERICAL System Model}
\label{sec:sphericalsystemmodel}
\textcolor{black}{
The Gaussian antenna pattern is defined as}
\begin{equation}
  G_{\textup{Gauss}}(\varphi)\overset{}{\triangleq} 2^{-\varphi^2/\varphi^2_{\textup{RX}}},
\end{equation}
\textcolor{black}{where $\varphi$ is the angle to the antenna boresight. }

To encompass the effect of sidelobes, the ITU-R \cite[LEO reference radiation pattern]{ITURS1528} antenna gain $G_\text{ITU-R}(\cdot)$ is also used in the simulations. It is given as
\begin{align}
  \label{eq:ITUpattern}
  &G_\text{ITU-R}[\varphi] \nonumber \\
  &\triangleq \begin{cases}
    10^{-3(\varphi^2/\varphi_{\text{RX}}^2)/10},0 \leq \varphi \leq 1.5 \cdot \varphi_{\text{RX}}^{\circ}  \\
    10^{-6.75/10 - 25\log_{10}\left(\frac{\varphi}{1.5 \cdot \varphi_{\text{RX}} } \right)/10},1.5 \cdot \varphi_{\text{RX}}   < \varphi \leq 20.4^{\circ} \\
    10^{-3},  \varphi> 20.4^{\circ}.
  \end{cases}
\end{align}
We consider that all UEs are within $\varphi <22^{\circ}$.

\textcolor{black}{The Gaussian antenna pattern approximates the main lobe ($-10$ dB) of the ITU-R antenna pattern (see, \cite{10909705}[Fig. 2].)}

\textcolor{black}{Explicitly, the total interference is defined in the spherical for the Gaussian beam and the ITU-R beams}
\begin{equation}
  \label{eq:mathringptot}
        {I}_{\textit{o}_E} =  \sum_{\px \in \Theta}  \frac{H_\px G_{\{\textup{Gauss, ITU-R}\}}(\varphi_\px)}{(d_{\px}/d_0)^{\gamma}} ,       
\end{equation}
\textcolor{black}{respectively. Here, $\Theta \subset E$ is a PPP on the sphere of radius $\rEarth = 6378$ km (generated as in \cite{10909705}[Sec. II-B]), $E$ representing the Earth surface, $\textit{o}_E$ is the beam center, $d_\px$ is the path loss of the UE $\px \in \Theta$, $\gamma = 4$ is the power path-loss exponent, and $d_0$ is a normalization parameter.}

\textcolor{black}{In the spherical model, the distances $\{d_\px\}$ and angles $\{\varphi_\px\}$ are calculated according to the spherical Earth geometry. Please see Appendix \ref{app:truegeom} for detailed derivations.} The simulations are conducted by moving the nadir-pointing beam of half-width $\varphi_{\text{RX}}=1.6^{\circ}$ of the LEO BS at $1000$ km (unless stated otherwise) over a realization of UEs, simulating $1000-5000$ different spatial locations. Because the dense PPP is approximately ergodic on the sphere, the spatial statistics, to a practical degree, represent the ensemble statistics. The received powers are normalized so that a UE at the beam center has received power $1$ at the LEO BS. Despite the fixed altitude, antenna width, and elevation angle in the simulations, no generality is lost, since the relevant parameter is \textit{the mean number of UEs inside the $-3$ dB footprint}, which will be argued in the analysis of this paper. An extensive description and comparison of the planar and spherical system models is available in \cite{10909705}: When the beam width is narrower than $4^{\circ}$, the planar and spherical models yield approximately the same total interference distributions for elevation angles $\epsilon>35^{\circ}$ for the LEO altitudes $h\leq 2000$ km. This is because for narrow beams, all relevant UEs have approximately equal spatial path losses. Hence, the statistically relevant ``path loss'' is due to antenna gain, which the planar model can approximate accurately. This paper further validates the feasibility of the planar system model in approximating the spherical model.

\textcolor{black}{Other than the geometrical modeling and the ITU-R antenna pattern, the spherical model is equivalent to the planar model.}

\begin{rem}
  For convenience, with slight abuse of notation, we will use the notation $I_{\textit{o}}$ for both the theoretical definition and the simulated values in the spherical Earth model (thus omitting the notation $I_{\textit{o}_E}$). We trust the context makes it clear whether we refer strictly to \eqref{eq:totalinterferenceat0} or \eqref{eq:mathringptot}. 
\end{rem}
\section{ANALYSIS}
\label{sec:analysis}

Let us first derive the beam width scaling that preserves the cell sizes. Since we want the distance between the beam center and the distance $\|x_{\text{RX}}\|$ to the $-3$ dB edge to be equal for all elevation angles, the area-preserving scaling of $\varphi_{\text{RX}}$ can be solved from the system of equations
\begin{equation}
  \begin{cases}
    D_{h,0}\|x_{\text{RX}}\|= \overline{\varphi^{}}_{\text{RX}}\\
    D_{h,\epsilon}\|x_{\text{RX}}\|=\varphi^{}_{\text{RX}},
  \end{cases}
\end{equation}
which yields
\begin{equation}\label{eq:preserveskappa}\varphi_{\text{RX}}  = \varphi_{\text{RX}}(\epsilon)=\overline{\varphi^{}}_{\text{RX}}\sin^2(\epsilon), \end{equation}
where $\overline{\varphi^{}}_{\text{RX}}$ is the nadir-pointing beam width.

  \begin{cor}[Density of the GP]
  \label{thm:densityoftheGP}
  The GP with a deterministic $H = 1 $ is an inhomogeneous PPP on $(0,1)\ni t$ with the density
  \begin{equation}
    \lambda_{\mathcal{G}_1}(t) \triangleq \tilde{\kappa}/t,
  \end{equation}
  where the GP parameter $\tilde{\kappa} = {\kappa}/\log(2)$ and
  \begin{equation}
    \label{eq:kappa}
    {\kappa} \triangleq \pi \lambda \left(\frac{\varphi_{\textup{RX}}}{D_{h,\epsilon}}\right)^2 = \pi \lambda \left(\frac{\varphi_{\textup{RX}}h}{\sin^2(\epsilon)}\right)^2 = \pi \lambda \left(\overline{\varphi}_{\textup{RX}}h\right)^2
  \end{equation}
  is the average number of UEs inside the $-3$ dB cells. Furthermore, for general $H$, with the support in $(-\infty, \infty) \ni t$
    \begin{equation}
    \label{eq:gaussprojdens}
    \lambda_{\mathcal{G}}(t)\triangleq\frac{ \tilde{\kappa}  F_H(t)}{ t}\Big|_{t\in \R_+}-\frac{\tilde{\kappa} (1-F_H(t))}{t}\Big|_{t\in \R_-},
    \end{equation}
    where $F_H(\cdot)$ is the complementary CDF (CCDF) of the RV $H$. With the random access, the density is simply multiplied by $p \in (0,1)$.
  \begin{proof}
    See \cite{10909705}[Lemma 1] for the proof of $\lambda_{\mathcal{G}_1}(\cdot)$ and for the interpretation of $\kappa$, based on simple geometry. The density of general $\mathcal{G}$ is encompassed in the proof of Lemma \ref{cor:GPPGFL}.
  \end{proof}
  \end{cor}
  As shown in the corollary, the \textit{expected number of UEs in the $-3$ dB cells}, $\kappa$, defines the interference statistics of the beam alone. In this sense, the theory is invariant w.r.t. the elevation angle, altitude, and antenna width. As long as we operate within the LEO orbits and elevation angles larger than $35^{\circ}$, this is also an accurate description of the spherical model. Furthermore, since with the cell area-preserving mapping, $\kappa$ is independent of the elevation angle, i.e., of the beam, all beams have the same statistics, and it is meaningful to talk about ``the typical beam''.

  \begin{lem}[PGFL of the GP]
    \label{cor:GPPGFL}
  Let $f(\cdot): \R \rightarrow [0,1]$, s.t. $f(x)\rightarrow 1$ as $|x| \rightarrow \infty$. The PGFL of $\mathcal{G}$ is 
  
  \begin{align}
    \label{eq:PGFLofGP}
    &\mathfrak{G}_{\mathcal{G}}(f)\triangleq \nonumber\\
    &\mathbb{E}\left(\prod_{\px \in \mathcal{G}} f(\px)\right)=\exp\left\{-\int_{-\infty}^{\infty}\left(1-f(x)\right)\lambda_{\mathcal{G}}(x)dx \right\}.
  \end{align}
  \begin{proof}

    Multiplying each $\px \in \mathcal{G}_1 \subset (0,1)$ by the i.i.d. $H_{\px}$, the probability kernel \cite{HAL1}[Thm. 1.3.9 (Displacement Theorem)] is $\rho(x,y) = f_H(y/x)/x$, where $f_H(\cdot)$ is the probability density function (PDF) of $H$. We have

    \begin{align}
      \label{eq:proofofthePGFL}
      &\mathbb{E}\left(\prod_{\py \in \mathcal{G}} g(\py)\right)  = \mathbb{E}_{\mathcal{G}_{1}}\left(\int_{-\infty}^{\infty} g(y)\prod_{\px \in \mathcal{G}_{1}}\rho(\px,y)dy\right) \nonumber \\
      &= \mathbb{E}\left(\prod_{\px \in \mathcal{G}_{1}}\left(\int_{-\infty}^{\infty} g(y)\rho(\px,y)dy\right) \right) \nonumber \\
      &\overset{(a)}{=}\exp\left\{-\tilde{\kappa} \int_{0}^{1}\left(1-\int_{-\infty}^{\infty}g(y)\rho(t,y)dy\right)/t dt \right\} \nonumber \\
      &\overset{(b)}{=}\exp\left\{-\tilde{\kappa}  \int_{-\infty}^{\infty}\left(1-g(y)\right) \int_{0}^{1}\rho(t,y)/t dt   dy\right\} \nonumber \\
      &\overset{}{=}\exp\left\{-\tilde{\kappa}  \int_{-\infty}^{\infty}\left(1-g(y)\right) \int_{0}^{1}f_H(y/t)/t^2 dt dy\right\} \nonumber \\
      &\overset{(c)}{=}\exp\left\{-\int_{-\infty}^{\infty}\left(1-g(y)\right) \lambda_{\mathcal{G}}(y) dy\right\},
    \end{align}
   where in (a) we use the PGFL of $\mathcal{G}_1$ (see \cite{10909705}[Eq. (15)]). In (b), we note that $\int_{-\infty}^{\infty}\rho(t,y)dy =1$, and switch the integration order. (c) follows by partial integration of the inner integral separately for $y<0$ and $y>0$. 
  \end{proof}
  \end{lem}

   \begin{cor}
     \textcolor{black}{Let $W=0$.} The SIR of the nearest UE signal at the typical Gaussian LEO BS beam is given by 
     \begin{align}
       \label{eq:hatSIR}
       & {\textup{SIR}}_{} =  \frac{{H_{\px_0}G(\|\px_0\|)}}{{I_\textit{o}-G(\|\px_0\|)}}\overset{}{=} \left(\frac{{I_\textit{o}-G(\|\px_0\|)}}{{H_{\px_0}G(\|\px_0\|)}}\right)^{-1} \nonumber \\
       &=\left(\frac{\sum\limits_{\px \in \Phi_{} \setminus \{\px_0\} }H_{\px}G(\| \px\|)}{H_{\px_0}G(\| \px_0\|)} \right)^{-1} \nonumber\\
       &= H_{\px_0}/\sum_{\px \in \mathcal{G}_{1}} H_{\px}\px=H_{\px_0}/I_\textit{o}.
     \end{align}
     \begin{proof}
       The result follows by conditioning $\px_0=\textit{o}$ and applying Slivnyak's theorem on the ratio $G(\|\px\|)/G(\|\px_0\|)$. See details in \cite{10909705}[Lemma 1]. 
     \end{proof}
   \end{cor}

   \begin{rem}
     Indeed, the SIR representation \eqref{eq:hatSIR} is equivalent to conditioning on the nearest UE at $\textit{o}$ (when $G(\|\px_0\|) = 1$). Hence, even though independent of the nearest UE location, \eqref{eq:hatSIR} is always guaranteed to describe the SIR statistics of the nearest UE. This peculiar property is due to the Gaussian form of the path loss (gain) $G(\cdot)$.     
   \end{rem}

   \section{SPATIO-TEMPORAL INTERFERENCE CORRELATION}
   \label{sec:spatiotemporal}
   The first and second-order statistics of the total interference are up to constants determined by the corresponding statistics of the typical signal $H$:
   
      \begin{thm}[Mean and variance of the total interference]
    \label{thm:meanandthevarianceofthetotalI}
    Assume that the mean and the average second power $|\mathbb{E}(H)|,\mathbb{E}(H^2)<\infty$, respectively. The mean and the variance of the total interference are
    \begin{align}
        \label{eq:expectedI}      &\mathbb{E}(I_{\textit{o}})=\mathbb{E}\left(\sum_{\px \in {\mathcal{G}}}  \px + W\right)\overset{(a)}{=} \int_{-\infty}^{\infty} x \lambda_{\mathcal{G}}(x)  dx +W \nonumber \\ &= \tilde{\kappa}\mathbb{E}(H) + W,     \\
      &\textup{var}(I_{\textit{o}}) =\textup{var}\left(\sum_{\px \in {\mathcal{G}}}  \px \right)\overset{(b)}{=}\int_{-\infty}^{\infty} x^2 \lambda_{\mathcal{G}}(t)  dx \nonumber \\
        &= \tilde{\kappa} \left(\int_{0}^{\infty} xF_H(x)  dx -\int_{-\infty}^{0} x(1-F_H(x))  dx \right)  \nonumber \\        
        &= \tilde{\kappa} \mathbb{E}(H^2)/2,        \label{eq:varI}
    \end{align}
        respectively. With the random access, the mean and the variance are simply multiplied by $p \in (0,1)$.
    \begin{proof}
      The integral identities (a) and (b) for the mean and the variance of the sum of the PPP can be found in \cite[Cor. 4.8]{alma998070634406526}.
    \end{proof}

      \end{thm}
      Without loss of generality, assume normalized $\mathbb{E}(H)=1$ and $W=0$. Denote $c \triangleq \|u-v\|$ where $u,v\in \R^2$ are the beam locations. Let $\epsilon_u$ and $\epsilon_v$ be the elevation angles at $u$ and $v$, respectively. The spatio-temporal correlation coefficient of the interferences $I_u(k)$ and $I_v(l)$ at time instances $k \neq l$ is obtained using Campbell's theorem and the second-order product density of a PPP:
      \begin{align}
        \label{eq:spati-temporl}
        &\zeta(u,v) = \zeta(\|u-v\|)=\zeta(c) \nonumber \\
        &\triangleq\mathbb{E} \left( (I_{u}(k)-\mathbb{E}(I_\textit{o})) (I_{v}(l)-\mathbb{E}(I_\textit{o}))\right)/\textup{var}(I_{\textit{o}})  \nonumber\\
        &= \mathbb{E}\Bigg(\sum_{\px \in \Phi(k)}H_{\px}(k) 2^{-(D_{h,\epsilon_u} \|\px-u \|)^2/\varphi_{\textup{RX}}(\epsilon_u)^2} \nonumber  \times \nonumber\\
        &\hspace{1.3cm}\sum_{\py \in \Phi(l)}H_{\py}(l) 2^{-(D_{h,\epsilon_v} \|\py -v\|)^2 /\varphi_{\textup{RX}}(\epsilon_v)^2}\Bigg)/\textup{var}(I_{\textit{o}}) \nonumber
      \end{align}
      \begin{align}
        &=\frac{p^2\mathbb{E}(H)^2 \lambda}{p\tilde{\kappa}\mathbb{E}(H^2)/2} \int_{\R^2} 2^{-\|x \|^2/(\overline{\varphi}_{\textup{RX}}h)^2-\|x -c\|^2/(\overline{\varphi}_{\textup{RX}}h)^2} dx  \nonumber \\
        &= \frac{p\lambda}{\tilde{\kappa}\mathbb{E}(H^2)/2}\frac{\tilde{\kappa}}{2\lambda} \exp\left\{-c^2/(2\overline{\varphi}^2_{\text{RX}}h^2) \log(2)\right\}\nonumber \\
        &= \frac{p}{\mathbb{E}(H^2)}\exp\left\{-c^2/(2\overline{\varphi}^2_{\text{RX}}h^2) \log(2)\right\}.
      \end{align}
      \textcolor{black}{This is the total interference correlation coefficient between two beams separated by a distance $c$, with beam widths scaled by \eqref{eq:preserveskappa} according to the beam elevation angle. Another formulation and interpretation is when a beam of width $\varphi_{\text{RX}}$ moves a distance $c$ (as the satellite moves):}
      \begin{equation}
        \label{eq:singlebeamcorr}
        \zeta(c) = \frac{p}{\mathbb{E}(H^2)}\exp\left\{-(cD_{h,\epsilon})^2/(2\varphi_{\textup{RX}}^2)\log(2)\right\},
      \end{equation}
      which is the same as \eqref{eq:spati-temporl} if we use the footprint area-preserving scaling of $\varphi_{\text{RX}}$ for each $\epsilon$. Note that \eqref{eq:singlebeamcorr} reflects the Gaussian-form antenna gain \eqref{eq:antennagain} of a beam width $\varphi_{\text{RX}}$. The correlation coefficients are independent of $\lambda$ and $\tilde{\kappa}$.

      \begin{figure}[ht]
        \centering
        \includegraphics[width=\linewidth]{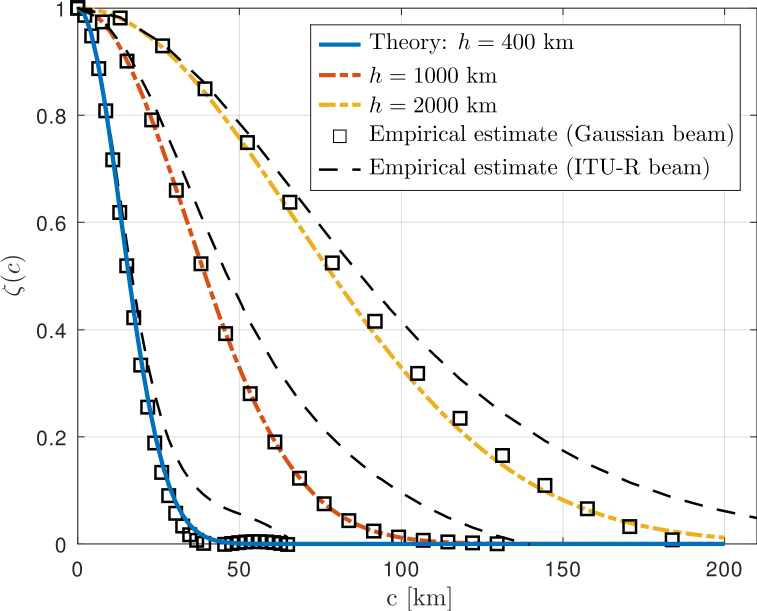}
        \caption{The spatial total interference correlation at altitudes $h \in \{400, 1000, 2000\}$ km. Deterministic fading $H = 1$, and the transmission probability $p=1$. }
        \label{fig:spatialcorrelation}
      \end{figure}
      
\textcolor{black}{Fig. \ref{fig:spatialcorrelation} depicts the theoretical and simulated spatial interference correlation coefficients at various altitudes between beams separated by $c$, or equivalently, for a single beam as the satellite moves a distance of $c$. The Gaussian and ITU-R beams \eqref{eq:ITUpattern} were used in the simulations for the empirical estimates. The simulated Gaussian correlations closely match the Gaussian form. The ITU-R estimates are biased, especially for larger $c$, due to the difference between the ITU-R and Gaussian gains. However, the correlation functions are near-Gaussian. Hence, we can infer that the sidelobe interference variance is small compared to the main-lobe variation since it has limited effect on the correlation coefficient. The Gaussian main-lobe component is statistically the most significant.}

     Since the sidelobes have a limited effect on the correlation coefficient, we put forth the following proposition.
      \begin{prop}[sidelobe modeling]
        \label{prop:sidelobemodeling}
        Interference from the antenna beam sidelobes can be approximated by a constant variable $W>0$.
      \end{prop}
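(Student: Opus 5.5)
This proposition is a modeling claim rather than a theorem, so the plan is to give it precise content and then justify it quantitatively.

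Write the received power at the beam center under the ITU-R pattern \eqref{eq:ITUpattern} as $I_{\textit{o}} = I^{\mathrm{ML}} + I^{\mathrm{SL}}$. The term $I^{\mathrm{ML}}$ sums over UEs with $\varphi_\px \le 1.5\varphi_{\textup{RX}}$ and is approximated by the Gaussian GP of Corollary~\ref{thm:densityoftheGP}. The term $I^{\mathrm{SL}}$ collects the remaining UEs. By the independence property of the PPP, $I^{\mathrm{ML}}$ and $I^{\mathrm{SL}}$ are independent, being sums over disjoint regions. The claim then becomes
\[
\textup{var}(I^{\mathrm{SL}}) \ll \textup{var}(I^{\mathrm{ML}}),
\]
with $\mathbb{E}(I^{\mathrm{SL}})$ not negligible. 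Under this claim, replacing $I^{\mathrm{SL}}$ by the constant $W \triangleq \mathbb{E}(I^{\mathrm{SL}})$ changes neither the mean nor, to first order, the variance or correlation of $I_{\textit{o}}$.

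I would compute both moments with Campbell's theorem as in Theorem~\ref{thm:meanandthevarianceofthetotalI}, using the planar mapping $\varphi_x \approx D_{h,\epsilon}\|x\|$. This gives
\[
\mathbb{E}(I^{\mathrm{SL}}) = \lambda p\,\mathbb{E}(H) \int G_{\textup{SL}}(\|x\|)\,dx, \qquad \textup{var}(I^{\mathrm{SL}}) = \lambda p\,\mathbb{E}(H^2) \int G_{\textup{SL}}(\|x\|)^2\,dx.
\]
The sidelobe gain decays like $(\varphi/1.5\varphi_{\textup{RX}})^{-2.5}$ from $-6.75$ dB, down to the floor $10^{-3}$. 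The key observation is that squaring the gain suppresses the sidelobes far more than the main lobe. In units of $\kappa$, the main-lobe variance is of order $\tilde\kappa\,\mathbb{E}(H^2)/2$. The sidelobe mean integral is of order $\kappa \log(\varphi_{\max}/\varphi_{\textup{RX}})$ plus a floor term $10^{-3}\pi\lambda R_{\max}^2$, which can be comparable to or larger than $\tilde\kappa$. The sidelobe variance integral has integrand of order $\varphi^{-5}\cdot\varphi$ and so converges near the main lobe. Its floor term is only $10^{-6}\pi\lambda R_{\max}^2$.

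The resulting ratio $\textup{var}(I^{\mathrm{SL}})/\textup{var}(I^{\mathrm{ML}})$ is a pure number, independent of $\lambda$. It can be evaluated, and the plan is to show that it is small, i.e.\ a few percent, for $\varphi < 22^\circ$ and the elevation range considered. Because $I^{\mathrm{SL}}$ is a sum over a very large number of weakly weighted points, its coefficient of variation scales like $\lambda^{-1/2}$, which further supports treating it as deterministic.

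Finally, I would check the consequence for correlation. Since the sidelobe contributions at nearby beams are nearly identical, the correlation coefficient with the sidelobes included is approximately $(\textup{cov}_{\mathrm{ML}} + \textup{var}_{\mathrm{SL}})/(\textup{var}_{\mathrm{ML}} + \textup{var}_{\mathrm{SL}})$. This deviates from \eqref{eq:singlebeamcorr} only at order $\textup{var}_{\mathrm{SL}}/\textup{var}_{\mathrm{ML}}$, consistent with the mild bias seen in Fig.~\ref{fig:spatialcorrelation}.

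The main obstacle is the floor region. The planar approximation and the assumption of equal path loss break down at large angles, and the floor term dominates the sidelobe mean. I would therefore evaluate that part numerically in the spherical model of \eqref{eq:mathringptot}, and take $W$ as the empirical mean, as the paper does.
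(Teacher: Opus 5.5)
Your route is genuinely different from the paper's. The paper gives no derivation. It infers the proposition from simulations: the ITU-R correlation curves in Fig.~\ref{fig:spatialcorrelation} stay near-Gaussian, so the sidelobe variance must be small next to the main-lobe variance, and the ITU-R trace in Fig.~\ref{fig:totalinterfence} is essentially a translate of the Gaussian one. It then fits $W$ ($0.65$, $2.25$) and explicitly declines to derive a formula. You instead split the sum into independent main-lobe and sidelobe PPP sums and apply Campbell's theorem, which turns the claim into a $\lambda$-free variance ratio that can be computed. Your $(\textup{cov}_{\mathrm{ML}}+\textup{var}_{\mathrm{SL}})/(\textup{var}_{\mathrm{ML}}+\textup{var}_{\mathrm{SL}})$ heuristic also explains why the ITU-R bias in Fig.~\ref{fig:spatialcorrelation} grows with $c$. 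The simulation route automatically includes spherical geometry, path loss and the exact pattern. Yours supplies the mechanism and, carried through, an explicit $W$.

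Several of your quantitative claims are wrong, though, and they lead you to misplace the difficulty.

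\textbf{No logarithm, and the floor is negligible.} The mean integrand is $\varphi^{-2.5}\cdot\varphi=\varphi^{-1.5}$, which is integrable. Put $u=(\varphi/\varphi_{\textup{RX}})^2$, so that $\lambda\,dx=\kappa\,du$ in the planar model.
\begin{itemize}
\item The power-law region $1.5\varphi_{\textup{RX}}<\varphi\le20.4^\circ$ contributes about $1.25\kappa$.
\item The floor applies only on the annulus $20.4^\circ<\varphi<22^\circ$, not on a disk of radius $R_{\max}$, and adds only about $0.03\kappa$.
\end{itemize}
So $\mathbb{E}(I^{\mathrm{SL}})\approx1.28\kappa\approx0.88\tilde{\kappa}$. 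The floor does not dominate, and the planar Campbell computation needs no numerical rescue.

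\textbf{The constant should be a difference of means.} The paper's reference model is the full Gaussian GP, which already contains the Gaussian tail beyond $1.5\varphi_{\textup{RX}}$, with mean $2^{-2.25}\tilde{\kappa}\approx0.21\tilde{\kappa}$. Hence the constant is
$W=\mathbb{E}(I_{\textup{ITU-R}})-\mathbb{E}(I_{\textup{Gauss}})\approx(0.88-0.21)\tilde{\kappa}\approx0.67\tilde{\kappa}$, not $\mathbb{E}(I^{\mathrm{SL}})$. For $\tilde{\kappa}\in\{1,3.3\}$ this gives $0.67$ and $2.2$, matching the paper's empirical $0.65$ and $2.25$ (with $p=1$ and $\mathbb{E}(H)=1$).

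\textbf{The variance to bound is that of the difference.} The relevant fluctuation is that of $I_{\textup{ITU-R}}-I_{\textup{Gauss}}$. Its variance is $\lambda p\,\mathbb{E}(H^2)\int(G_{\textup{ITU-R}}-G_{\textup{Gauss}})^2dx\approx0.019\kappa\,p\,\mathbb{E}(H^2)$, which is under $3\%$ of $\textup{var}(I_{\textit{o}})$. Your own ratio $\textup{var}(I^{\mathrm{SL}})/\textup{var}(I^{\mathrm{ML}})\approx0.067/0.69\approx0.1$ is not ``a few percent'', though it is still small.

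\textbf{The $\lambda^{-1/2}$ remark adds nothing.} The coefficient of variation of the main-lobe sum also scales like $\lambda^{-1/2}$, so it does not single out the sidelobes. Only the $\lambda$-free ratio matters.
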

      $W$ depends on the domain and density of interfering UEs, the altitude of the satellite, the width of the antenna beams, and the elevation angle. In this work, we do not derive an explicit formula, but determine $W$ empirically through simulations.


      \section{TEMPORAL CORRELATION OF LINK OUTAGES }
      \label{sec:tempcorlinkout}
      Let $A_k$ denote the event that the nearest UE SIR exceeds the threshold $\theta$ at time instant $k$ at the typical beam. Similarly to \eqref{eq:hatSIR}, one can define the SIR for a general $W\geq 0$ as
      \begin{equation}
        \textup{SIR} = \frac{H_{\px_0}(k)}{\sum_{\px \in \mathcal{G}_1(k)} H_{\px}(k) \px + W/\px_0} >\theta,
      \end{equation}
      where $W$ is divided by the largest gain $\px_0$ (the gain of the nearest UE).

      Crucially, the SIR distribution with Rayleigh fading closely approximates that with general Nakagami-$m$ fading, where $m$ is the Nakagami fading parameter, in the simple coverage region $\theta \geq 1$, and it is pessimistic for $\theta <1$ \cite{10909705}[Figs 6 and 7]. Hence, for analytical lubrication, we use $H_{\px}\sim\text{exp}(1)$. Similar analytical methods apply to Nakagami fading in the following analysis. However, the expressions become unnecessarily complicated, given our focus on qualitative insight rather than numerical precision. In the simple coverage region, the following analysis also applies to Rice-$K$ fading, where $K$ is the Rician parameter.

      The joint probability of the sequential events $A_k$, $A_l$, $l\neq k$, occurring is as follows.
   \begin{align}
     \label{eq:jointprobAkAl}
     &\mathbb{P}(A_k,A_l) \nonumber\\
     &=\mathbb{P}\Bigg(H_{\px_0}(k) > \theta \left(\sum_{\px \in \mathcal{G}_1(k)} H_{\px} (k)\px  +\frac{W}{\px_0(k)}\right), \nonumber \\
     &\hspace{2cm}H_{\px_0}(l) > \theta \left(\sum_{\px \in \mathcal{G}_1(l)} H_{\px} (l) \px + \frac{W}{\px_0(l)} \right) \Bigg) \nonumber \\
     &=\mathbb{E}_{H,\mathcal{G}}\left( \prod_{\px \in \mathcal{G}_1(k)} \exp\left\{-\theta  H_{\px} (k) \px \right\} \right) \times\nonumber \\
     &\hspace{1cm}\mathbb{E}_{H,\mathcal{G}}\left(\prod_{\px \in \mathcal{G}_1(l)}\exp\left\{-\theta  H_{\px} (l) \px \right\} \right) \times \nonumber \\
          &\hspace{1cm}\mathbb{E}\left(e^{- \theta W/\px_0(k)}\right)\mathbb{E}\left(e^{- \theta W/\px_0(l)} \right) \nonumber \\
     & \overset{(a)}{=}\mathbb{E}_{\mathcal{G}}\left( \prod_{\px \in \mathcal{G}_1}\left(\frac{p}{1+\theta \px} +1-p\right)^2\right) \mathcal{L}_{1/\px_0}(s)^2 \nonumber \\
     &\overset{(b)}{=} \exp\left\{ -\tilde{\kappa} \int_0^1 \left(1- \left(\frac{p}{1+\theta r} +1-p\right)^2\right)/r dr\right\} \times \nonumber \\
     & \hspace{1cm}\tilde{\kappa}^2E_{\tilde{\kappa}+1}( \theta W)^2
     \nonumber\\
     &=e^{-p^2\theta \tilde{\kappa}/(1+\theta)}(1+\theta)^{p\tilde{\kappa}(p-2)} (p\tilde{\kappa})^2E_{p\tilde{\kappa}+1}(W \theta)^2,
   \end{align}
   where $E_{\tilde{\kappa}+1}(\cdot)$ is the generalized exponential integral.   In (a), we  average over the fading RVs (Laplace transform of the exponential RV) and the slotted ALOHA random access. (b) is the PGFL of the GP and the Laplace transform $\mathcal{L}_{X}(s) \triangleq \mathbb{E}(e^{-sX})$ of the inverse largest gain $X=1/\px_0$ (see Appendix \ref{appendix:largestgain}).

   Similarly, one can derive the SIR as the \textit{Lomax distribution} tempered by the exponential integral:
   \begin{equation}
     \label{eq:SIRLomaxdist}
     \mathbb{P}(A_k)= (1+\theta)^{-p \tilde{\kappa}}p\tilde{\kappa}E_{p\tilde{\kappa}+1}(W \theta),
     \end{equation}
   and the conditional probability is
   \begin{align}
     &\mathbb{P}(A_k|A_l)= \frac{\mathbb{P}(A_k,A_l)}{\mathbb{P}(A_l)} \nonumber \\
     &= e^{-p^2\theta \tilde{\kappa}/(1+\theta)}(1+\theta)^{p^2\tilde{\kappa}}(1+\theta)^{-p\tilde{\kappa}}p\tilde{\kappa}E_{p\tilde{\kappa}+1}(W \theta).
   \end{align}
   We have $\mathbb{P}(A_k|A_l) > \mathbb{P}(A_k)$ for all $p \in (0,1]$. In a single access network ($p=1$), without the sidelobes (when $W=0$ and  $\tilde{\kappa} E_{\tilde{\kappa}+1}(W\theta)=1$), in the limit,
   \begin{equation}
      \mathbb{P}(A_k|A_l) \rightarrow e^{-\tilde{\kappa}} \textup{ as } \theta \rightarrow \infty,
   \end{equation}
   which clearly differs from the corresponding limit of $\mathbb{P}(A_k)$ for small $\tilde{\kappa}$. The probability of connecting at the next time instance is higher if the UE-LEO BS link is already connected at the current time. This indicates that some LEO BS beams may be in outage during a use period, while others provide a consistently good connection. \textcolor{black}{The SIR correlation is determined through the correlation coefficient of the total interference. This is because the SIR can be characterized by the inverse of the total interference (recall (17)). Hence, the outage correlation vanishes as the LEO BSs move outside the region where the total interference is (to a practical degree) correlated (see Fig. 2).} On the other hand, the presence of sidelobe interference (or noise) reduces spatio-temporal correlation (while deteriorating the SIR): For any $W>0$, $\mathbb{P}(A_k|A_l)=\mathbb{P}(A_k)=0$, as $\theta \rightarrow \infty$, and for small $\theta$, the difference in the conditional and unconditional probabilities depends on the magnitude of $W$. The spatial clustering of outages in the single access network can be mitigated by a random access network $p \in (0,1)$.

   The mean and the variance of the SIR without the sidelobes ($W=0$) are
   \begin{align}
     \label{eq:SIRmean}
     \mathbb{E}(\textup{SIR})&= \int_{0}^{\infty}(1+t)^{-p\tilde{\kappa}} dt = \frac{1}{p\tilde{\kappa}-1},\textup{ for } \tilde{\kappa}>1 \\
     \textup{var}(\textup{SIR}) &= 2 \int_{0}^{\infty}t(1+t)^{-p\tilde{\kappa}} dt - \left(\int_{0}^{\infty}(1+t)^{-p\tilde{\kappa}} dt\right)^2 \nonumber\\
     &=   \frac{p^2\tilde{\kappa}^2}{(p\tilde{\kappa}-2)(p\tilde{\kappa}-1)^2}  , \textup{ for }\tilde{\kappa }>2  \label{eq:SIRvariance},
   \end{align}
   respectively. The mean and variance are finite for $p\tilde{\kappa}>1$ and $p\tilde{\kappa}>2$, respectively; for $p\tilde{\kappa}<2$, the variation in the SIR is significant across the beams. This notion is made precise in the SIR meta distribution analysis in \cite{10909705}[Sec. III]. With sidelobes or noise ($W>0$), the mean and the variance are always finite. However, they have complicated expressions. 

         \begin{figure}[ht]%
        \centering
        \subfloat[Total interference.]{{\includegraphics[width=\linewidth]{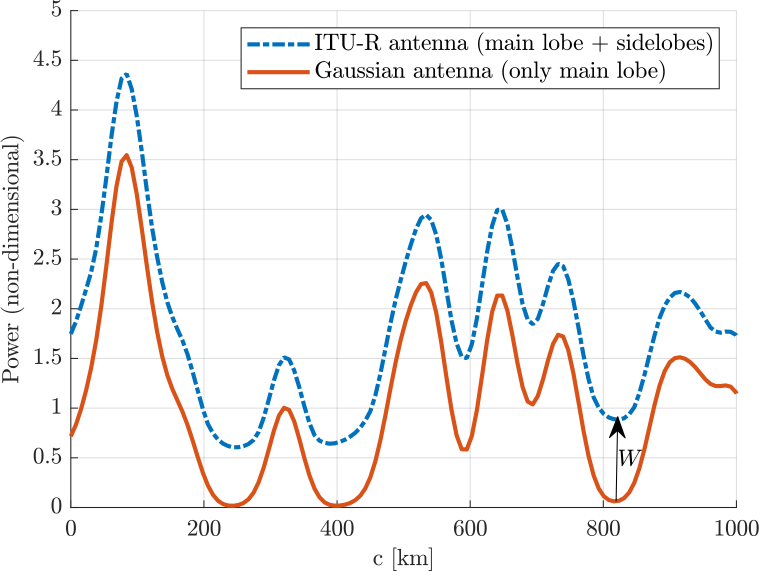}}
          \label{fig:totalinterfence}
        } 
        \qquad
        \subfloat[SIR of the nearest UE.]{{\includegraphics[width=\linewidth]{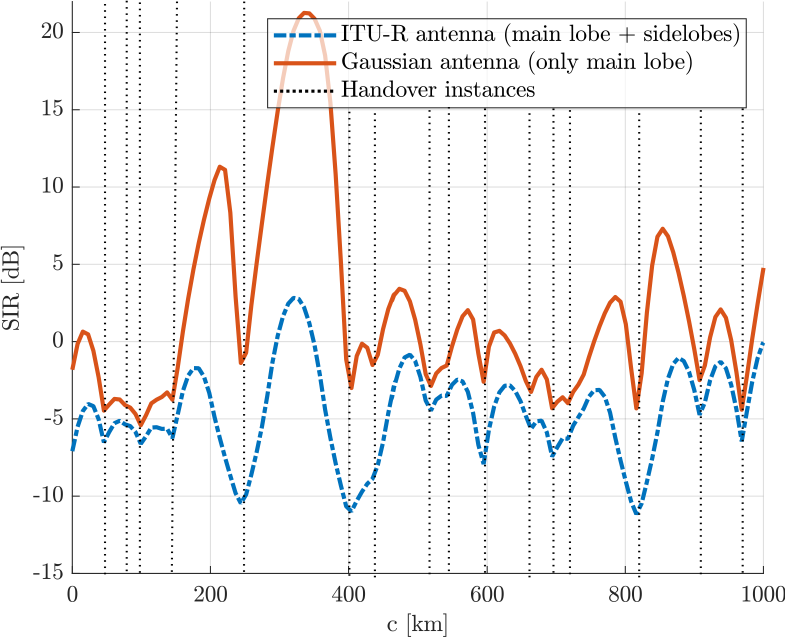}}     \label{fig:nearestUESIR}} 
        \caption{The total interference and the nearest UE SIR in a single access network ($p=1$) \textcolor{black}{at a beam} as the satellite moves a distance $c$ (equivalently, \textcolor{black}{between two beams separated by $c$}). In the simulations, the LEO BS is at $h=1000$ km and $\overline{\varphi}_{\text{RX}}=1.6^{\circ}$, which corresponds to $\overset{\sim}{\kappa}=1$. The sidelobe power $W=0.65$. Fading is deterministic; $H=1$.}
         \end{figure}

         In Figs. \ref{fig:totalinterfence} and \ref{fig:nearestUESIR}, the simulated total interference and the nearest UE SIR are plotted for the ITU-R and Gaussian antennas in the line-of-sight (LoS) channel (deterministic $H$). The handover instances when the served (and nearest) UE ceases to be the nearest UE in the beam are marked in Fig. \ref{fig:nearestUESIR} by a dotted line. Notably, the total interference for the ITU-R antenna is a mere translation of the interference with the Gaussian antenna, further validating Proposition \ref{prop:sidelobemodeling}. Furthermore, the SIRs are correlated between the antennas. The GP parameter $\tilde{\kappa}=1$ is in the infinite-mean and undefined-variance SIR regime with the Gaussian antenna. For the ITU-R antenna, the variance is relatively small ($\textup{var}(\textup{SIR}) \approx -16$ dB), but it still exhibits much greater variation than for the larger $\tilde{\kappa}$. The total interference (defined, recall, as the total received power from all UEs) is slightly correlated with the corresponding antenna SIR, suggesting a minimum-complexity method for channel-state estimation. For the Gaussian antenna, this correlation is always negative. For the ITU-R antenna, the correlation is positive for small GP parameters ($\tilde{\kappa}<4$) and negative for large parameters. This is because for small $\tilde{\kappa}$, high total interference entails a high probability that the served UE is in the main lobe beam with no interferers in the main beam. On the other hand, for large $\tilde{\kappa}$, high total received power likely also entails considerable main lobe interference. For the random series in Figs. \ref{fig:totalinterfence} and \ref{fig:nearestUESIR}, the Pearson cross-correlation coefficients $\mathbb{E}((\textup{SIR}-\mathbb{E}(\textup{SIR}))(I_{\textit{o}}-\mathbb{E}(I_{\textit{o}})))/\sqrt{\textup{var}(\textup{SIR})\textup{var}(I_{\textit{o}})}$ between the SIR and the total interference are $0.12$ and $-0.04$ for the ITU-R antenna and the Gaussian antennas, respectively.

         Recall that the SIR values approximately follow the Lomax distribution statistics in the simple coverage region, and the Lomax model is pessimistic otherwise. The interference power statistics approximately follow the gamma distribution, which will be argued in the following.

   \begin{cor}[Gamma distribution model for the interference]
     \label{ex:gammdistmodel}
        Assume Rayleigh fading and $W=0$. The Laplace transform of the total interference is
        \begin{equation}
          \label{eq:gammamodelrayray}
          \mathcal{L}_{I_{\textit{o}}}(s) = \mathbb{E}(e^{-I_{\textit{o}}s}) = (1+s)^{-p \tilde{\kappa}},
        \end{equation}
        which is the Laplace transform of the gamma distribution with the shape parameter $p \tilde{\kappa}$.
        \begin{proof}
          The result is a direct consequence of \eqref{eq:SIRLomaxdist}. Namely, assuming $H \sim \exp(1)$, for $s \geq 0$,
          \begin{align}
            &\mathbb{P}(\textup{SIR} > s) = \mathbb{P}\left(H_{\px_0}(k) >s \sum_{\px \in \mathcal{G}_1(k)} H_{\px}(k) \px \right) \nonumber\\
            &=\mathbb{E}_{H,\mathcal{G}_1}\left( \exp\left\{-s \sum_{\px \in \mathcal{G}_1(k)} H_{\px}(k) \px \right\}\right)  = \mathbb{E}(e^{-sI_{\textit{0}} }),
          \end{align}
          which can be extended to the non-negative complex half-plane by the analytical continuation. 
        \end{proof}
   \end{cor}
   From the corollary, we put forth a gamma distribution interference approximation model for Rician fading by determining the shape and scale parameters via second-order moment matching. Furthermore, with the sidelobes, a gamma distribution shifted by $W$ is feasible.

   In Figs. $\ref{fig:gaussI0}$ and $\ref{fig:ITURI0}$, the total empirical and theoretical interference distributions are plotted in the LoS channel with the GP parameter $\tilde{\kappa} =1$, and in $\ref{fig:gaussI0tkappa5}$ and $\ref{fig:ITURI0tkappa5}$ for the Rayleigh channel with $\tilde{\kappa} =3.3$. The empirical values are simulated in the spherical model. For the theory, we use the gamma distribution model proposed in Corollary \ref{ex:gammdistmodel} with moment matching to the mean and the variance \eqref{eq:expectedI} and \eqref{eq:varI}, respectively. For the ITU-R beam, we shift the PDF by $W= 0.65$ for $\tilde{\kappa} = 1$, as empirically observed in Fig. \ref{fig:totalinterfence}; whereas for $\tilde{\kappa} = 3.3$ by $W=2.25$, similarly empirically derived.  Recall that since PPP is ergodic, the ensemble statistics follow the spatial statistics. Hence, the gamma distribution model applies to the multi-access network ($p<1$). For $\tilde{\kappa} \rightarrow \infty$, the interference distributions approach a Gaussian distribution in the positive domain. The empirical histograms coincide closely with the theoretical PDFs, and thus the gamma distribution model is a good, tractable candidate for interference modeling in the narrow-beam LEO uplink.

         \begin{figure}[ht]%
           \centering
           \subfloat[Gaussian antenna (only main lobe).]{{\includegraphics[width=\linewidth]{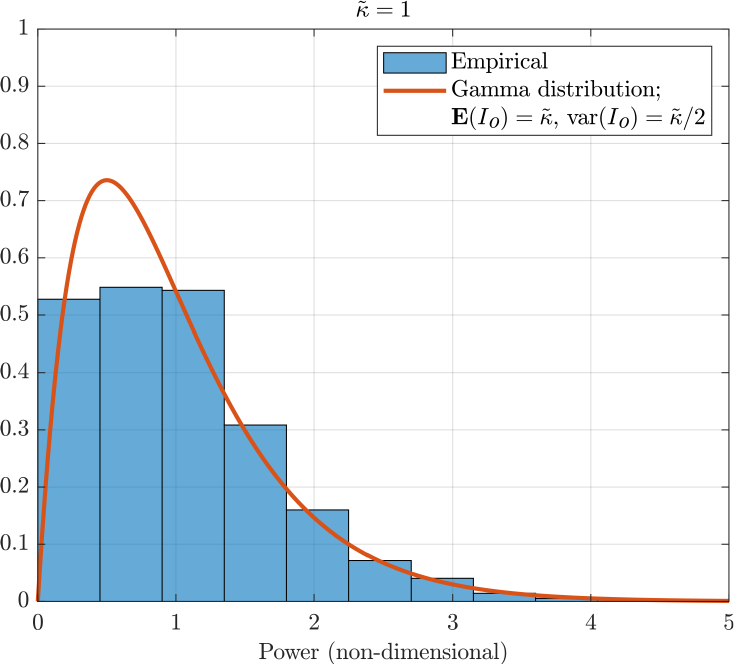}}
               \label{fig:gaussI0}}
           \qquad
           \subfloat[ITU-R antenna (main lobe $+$ sidelobes).]{{\includegraphics[width=\linewidth]{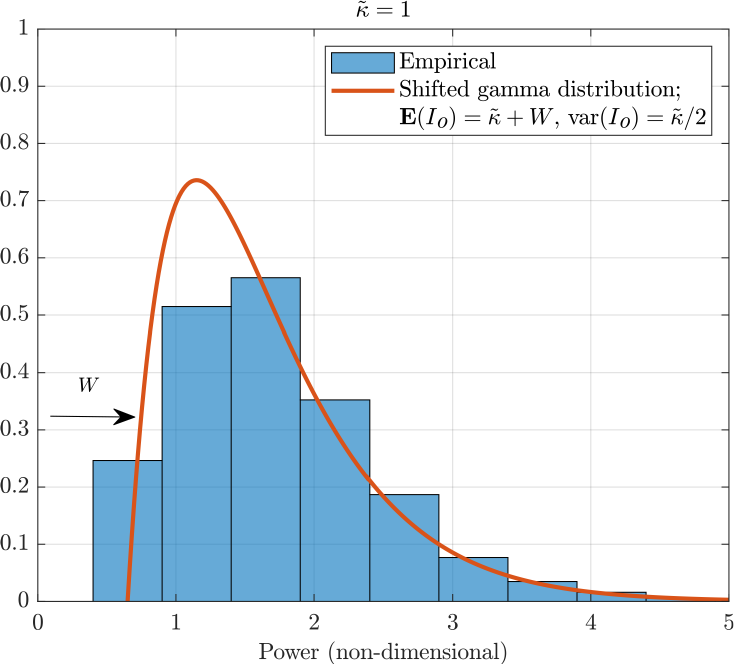}}     \label{fig:ITURI0}} 
           \caption{The simulated LoS channel ($H=1$) total interference PDFs \textcolor{black}{at a beam as the LEO BS moves} over a period of time in a single access network ($p=1$), and the gamma distribution model. The simulation parameters are $\overline{\varphi}_{\text{RX}} = 1.6^{\circ}, h=1000$. The sidelobe power $W=0.65$.}
         \end{figure}

         \begin{figure}[ht]%
           \centering
           \subfloat[Gaussian antenna (only main lobe).]{{\includegraphics[width=\linewidth]{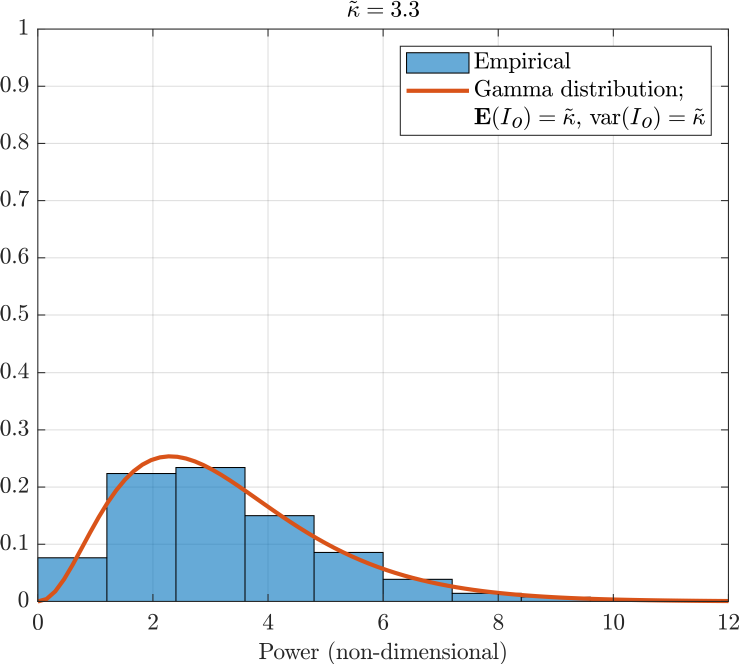}}
               \label{fig:gaussI0tkappa5}}
           \qquad
           \subfloat[ITU-R antenna (main lobe $+$ sidelobes).]{{\includegraphics[width=\linewidth]{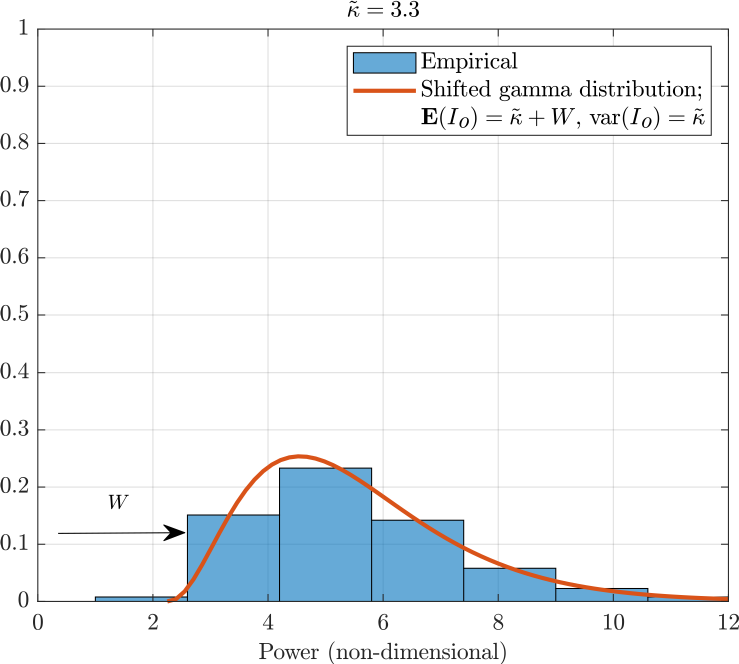}}     \label{fig:ITURI0tkappa5}} 
           \caption{The simulated Rayleigh channel ($H=1$) total interference PDFs \textcolor{black}{at a beam as the LEO BS moves} over a period of time in a single access network ($p=1$), and the gamma distribution model. The simulation parameters are $\overline{\varphi}_{\text{RX}} = 1.6^{\circ}, h=1000$. The sidelobe power $W=2.25$.}
         \end{figure}

   \subsection{MAXIMAL AVERAGE THROUGHPUT WITHOUT SPATIAL CORRELATION}
   With the Gaussian beams, the average bandwidth-normalized Shannon throughput of the LEO multiple-access network is given by \cite{10909705}[Eq. (36)] 
   \begin{equation}
     \label{eq:throughput}
     \mathbb{E}(\log(1+\textup{SIR}))= 1/(p \tilde{\kappa}).
   \end{equation}
   Furthermore, in an interference-plus-noise-limited channel (or a channel with sidelobe interference), the optimal throughput is given for $p\tilde{\kappa}\approx 1$ for moderate $W$. This means, on average, $p{\kappa}\approx\log(2) \approx 0.7$ effective UEs in a $-3$ dB beam footprint area, meaning that the served UE is inside the $-3$ dB beam footprint with probability $1/2$ and in the main lobe footprint with probability $9/10$.

   Recalling \eqref{eq:SIRmean} and \eqref{eq:SIRvariance} for $p=1$, in a single access network, especially with small sidelobe interference, due to significant spatial variation in interference and SIR, some LEO BS beams serve with high average throughput, while others serve with low average throughput. However, using slotted ALOHA random access with a suitable transmission probability $p$, we can mitigate spatial correlation among the SIRs and improve throughput consistency compared to the single access network, while preserving average throughput. Namely, for given $\rho\geq1$, set $p \triangleq 1/\rho$, and set the new GP parameter $\tilde{\kappa}_{\text{ALOHA}} \triangleq \rho \tilde{\kappa}$. Then the throughput $\mathbb{E}(\log(1+\textup{SIR}))=1/(p \tilde{\kappa}_{\text{ALOHA}}) = 1/\tilde{\kappa}$, but it is easy to see that 
 \begin{equation}
   \label{eq:limitconditionalprob}
    \mathbb{P}(A_k,A_l)= \mathbb{P}(A_k)\mathbb{P}(A_l)
  \end{equation}
 in the limit $\rho \rightarrow \infty$. That is, the distributions of the SIR and the throughput of sequential transmissions are independent: Conditioning on any constellation configuration, the temporal average throughput, averaged over a use period, at each LEO BS beam is equivalent in the limit $\rho \rightarrow \infty$ in the multi-access network while preserving the throughput of the single access network. As a trade-off, the transmission window of the served UEs is reduced, each being a fraction $p=1/\rho$ of the use period.

 \begin{example}[An equivalent slotted ALOHA network]
   Let $\varphi_{\textup{RX}}=\overline{\varphi}_{\textup{RX}} = 0.0278$, $\epsilon =\pi/2$, $h=1000 \textup{ km}$, and $\lambda = 4.1 \times 10^{-4}/\textup{ km}^2$. The GP parameter is
   $$\tilde{\kappa} = \pi \lambda \left(\frac{\varphi_{\textup{RX}}}{\sin^2(\epsilon)/h} \right)^2 = \pi 4.1 \times 10^{-4} (0.0278 \times 1000)^2 \approx 1,$$
   which corresponds to the average network throughput $1/(p \tilde{\kappa})=1$. To stabilize the performance of each LEO BS beam while preserving the average throughput, we can make the number of co-channel transmissions three times denser, $ \lambda=12.3 \times 10^{-4}/\textup{ km}^2$, and set the transmission probability to $p=1/3$. This corresponds to a $3$-tier multi-access network.
 \end{example}

\begin{rem}[Interpretation of the GP parameter as the cell size]
  One can see from the void probability of the PPP (see Appendix \ref{appendix:largestgain}) that for $p\tilde{\kappa}=1$, a fraction $9/10$ of served UEs are in the $-10$ dB beams. On the other hand, for $p\tilde{\kappa}=3.3$, the equivalent percentile applies to the $-3$ dB beams. In this sense, $p\tilde{\kappa} \in \{1,3.3\} $ can be considered to correspond to the cell sizes of $-10$ dB (large) and $-3$ dB (small) footprints, respectively. 
\end{rem}

In Figs. \ref{fig:probofccov} and \ref{fig:probofcovITU}, we plot the unconditional and conditional coverage probabilities for the GP parameters $p\tilde{\kappa} \in \{1,3.3\}$ and transmission probabilities $p \in \{1,1/3\}$ for the Gaussian beam and the ITU-R beam, respectively. Recall from \eqref{eq:SIRvariance} that, for the Gaussian beam, $p\tilde{\kappa}=1$ is in the infinite variance SIR region, and for $p\tilde{\kappa}=3.3$, the variance is finite, reflecting smaller variation in the throughput over the beams. Hence, for small cell sizes, the user experience is more consistent than for large cell sizes. One can see that in the multi-access network, increasing $\tilde{\kappa}$ while decreasing $p$ accordingly causes the conditional distribution $\mathbb{P}(A_k|A_l) $ to approach the unconditional distribution $\mathbb{P}(A_k)$. This shows that random access and appropriate transmission probability allow the average throughput at each LEO BS beam over each use period to remain consistent \textit{from the perspective of the LEO BSs}. Furthermore, a crucial observation is that sidelobe interference significantly degrades the SIR.

In Fig. \ref{fig:probofccov}, the error in the simulated conditional coverage probability for the ITU-R beam is because, with $p\tilde{\kappa}=1$, the served UE has a $1/10$ probability of being outside the $-10$ dB beam, which is not accurately captured by the proposed theoretical model for small $\theta$. 

Recall that the analytically derived (theoretical) SIR distribution is for Rayleigh fading: with a line-of-sight component, the unconditional Rayleigh coverage probabilities (i.e., the Lomax distribution or the tempered Lomax distribution) are pessimistic for $\theta <0$ dB and approximate the Rice coverage probabilities for $\theta >0$ dB. The conditional Rayleigh probabilities are pessimistic vis-à-vis the Rician fading.



          \begin{figure}[ht]%
           \centering
           \subfloat[Large cell size]{{\includegraphics[width=\linewidth]{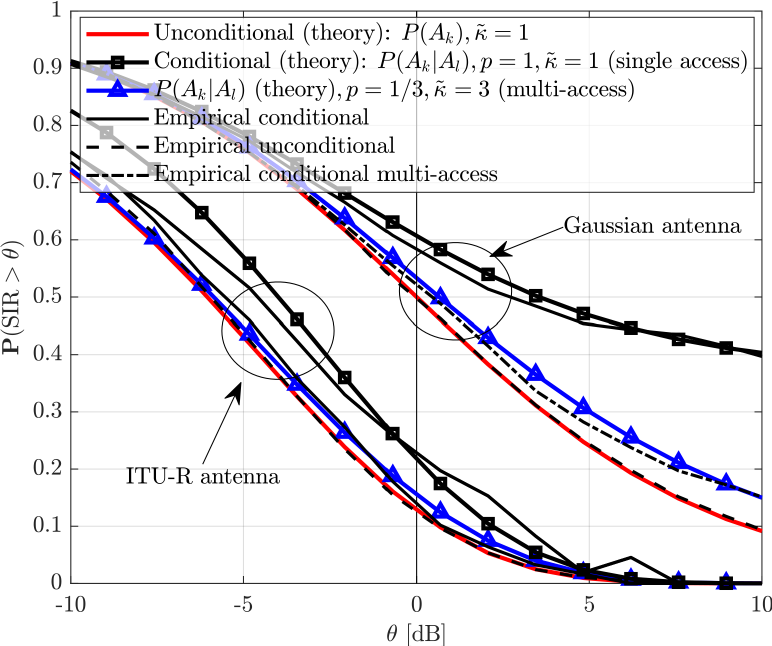}}
               \label{fig:probofccov}}
           \qquad
           \subfloat[Small cell size]{{\includegraphics[width=\linewidth]{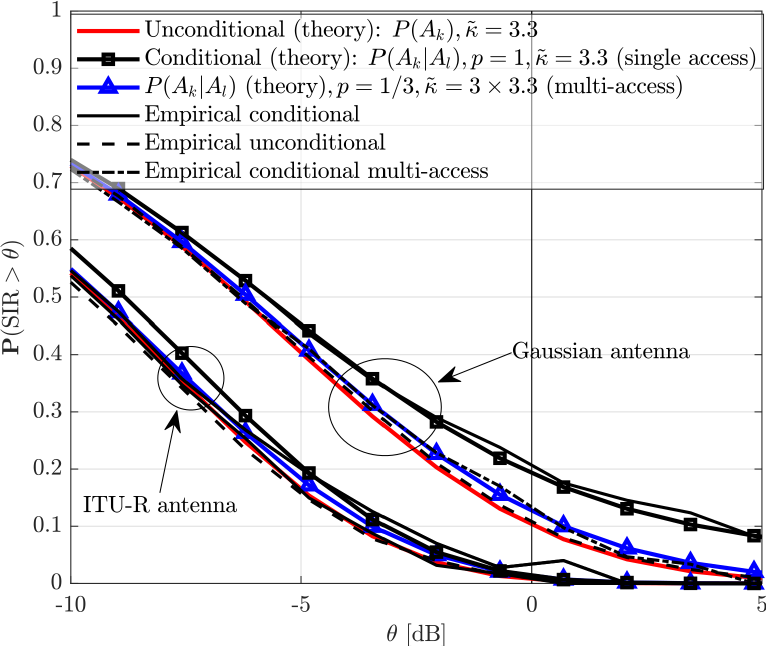}}     \label{fig:probofcovITU}} 
           \caption{The conditional and unconditional coverage probabilities for $p \overset{\sim}{\kappa}_{}  \in \{1,3.3\}$ and the transmission probabilities $p \in \{1,1/3\}$. In the single access network, $p=1$, the GP parameters correspond to $\lambda \in$ $\{4.0 \times 10^{-4},$ $1.3 \times 10^{-3}\}/\textup{ km}^2$, respectively, for $h=1000$ km and $\varphi_{\textup{RX}} = 1.6^{\circ}$. }
         \end{figure}

          \section{CONCLUSIONS}
          \label{sec:conclusions}
   
  To wrap up the observations:
  \begin{enumerate}
  \item The spatial correlation coefficient of interference power at the typical LEO BS beam has a simple Gaussian form.
  \item The interference power distribution is approximately a gamma distribution or a shifted gamma distribution in an Rician fast-fading channel.
  \item In a single access network, if the UE-LEO BS link is in outage during a block, the probability that the next block is in outage is increased compared to a connected link, illustrating spatial clustering of outages during the use periods.
  \item The slotted ALOHA protocol can mitigate the spatial clustering of outages while preserving average network performance.
  \item On the downside, the slotted ALOHA does not improve user fairness: the spatial SIR correlation mitigation applies only from the perspective of the LEO BSs being an average over the served UEs; however, the throughput of individual UEs varies significantly.
  \end{enumerate}


   


  The UE SIR variation is significant due to high UE signal gain variation in the LEO BS beams as the satellites move, especially for large cell sizes. In this regard, slotted ALOHA is a promising solution for equalizing throughputs among LEO BS beams compared to a single access network. For smaller cell sizes, the LEO BS beam throughputs have less spatial clustering; however, intracell and inter-cell interference are increased, which significantly deteriorates the average SIR. More detailed considerations, such as user fairness, achievable throughput from the perspective of the LEO BSs, and related trade-offs, should be further addressed in future explorations of more sophisticated grant-free protocols and non-terrestrial network designs. Furthermore, the optimal cell size should be further investigated, including intercell and intracellular interference and its dependence on practical beamforming and grant protocols. 

  The proposed stochastic geometry framework provides tractable analytical tools for random access, beam management, and other design aspects of LEO networks. For example, the novel gamma distribution interference model and the corresponding closed-form interference correlation function are valuable prior hypothesis classes for machine learning signal processing in sophisticated interference management in narrow-beam LEO networks \cite{9378781}.

   \appendices

   \section{SCALING CONSTANT}
   \label{app:constantD}
   See Fig. \ref{fig:triangle}. We have that $\zeta_{z}= \tan^{-1}(z/h)$. The derivative of $\varphi_x$  around $\textit{o}$ is given approximately by
   \begin{align}
     &\frac{d}{d \| x\|}\varphi_{x} =\frac{d}{dz}\zeta_{z}  = \frac{d\tan^{-1}(z/h)}{dz} = \frac{h}{h^2 + z^2}   \nonumber\\
     &\overset{(a)}{\approx} \frac{h}{h^2 -h^2 + \hat{d}^2_{h,\epsilon}}  \overset{(b)}{=}\frac{h}{ h^2/\sin^2(\epsilon)} = \frac{\sin^2(\epsilon)}{h} = D_{h,\epsilon},
   \end{align}
   where (a) follows from Pythagoras's theorem, and (b) is standard trigonometry.

   \begin{figure}[ht]
     \centering
     \includegraphics[width=0.7\linewidth]{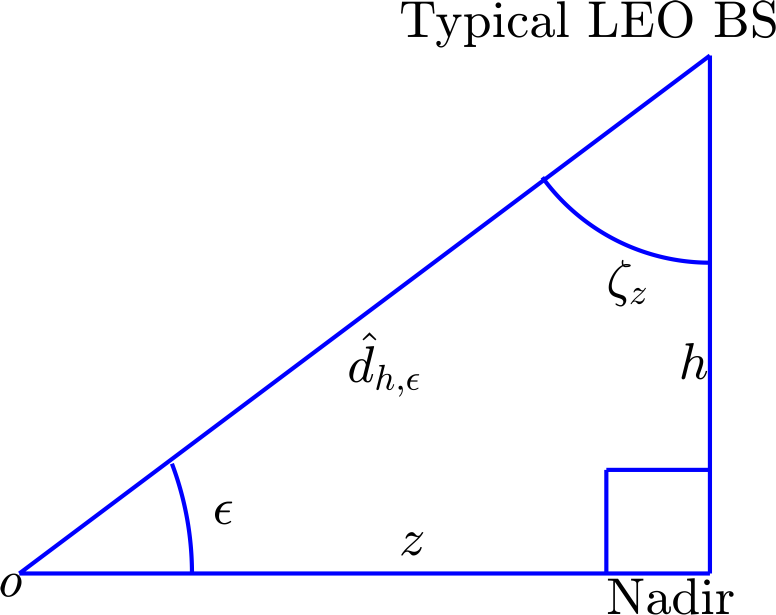}
     \caption{Geometric interpretation of the variables in Appendix \ref{app:constantD} }
     \label{fig:triangle}
   \end{figure}

   \section{Geometry of the spherical system model}
   \label{app:truegeom}

   \begin{figure}[ht]
     \centering
     \includegraphics[width=0.7\linewidth]{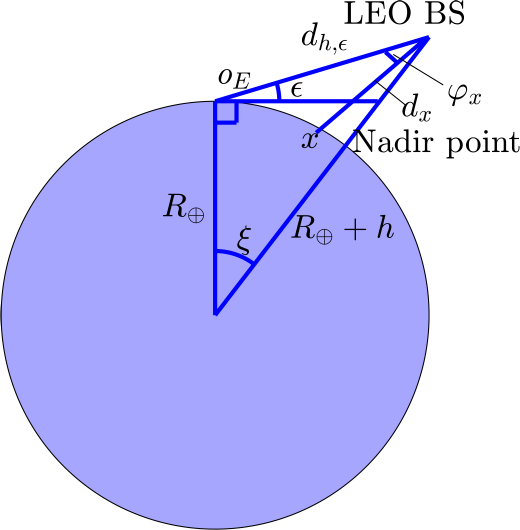}
     \caption{Sketch of the geometry of the spherical model}
     \label{fig:sphericalmodel}
   \end{figure}

   See Fig. \ref{fig:sphericalmodel}. Directly from the law of cosines, we have
   \begin{align}
     \label{eq:hatd}
     &{d}_{h,\epsilon}(\xi) \nonumber\\
     &= \sqrt{\rEarth^2+(\rEarth +h)^2-2\rEarth(\rEarth+h)\cos(\xi)},
   \end{align}
   Furthermore, we may derive the relation between $\epsilon$ and $\xi$. Namely, the law of cosines states that
   \begin{equation}
     \label{yhtaloryhma}
     (\rEarth + h)^2 = {d}_{h,\epsilon}(\xi)^2 + \rEarth^2 - 2 {d}_{h,\epsilon}(\xi) \rEarth \cos(\pi/2 +\epsilon),
   \end{equation}
 which is analytically solvable for $\xi$. 

 For the beam pointed to the nadir, the angle to the boresight can be calculated as
   \begin{equation}
     \label{eq:varphixxxx}
     \varphi_x =\frac{\cos^{-1}\left(d_{h,\epsilon}(\xi_x)^2 + (\rEarth + h)^ 2 - \rEarth^2\right)}{2  d_{h,\epsilon}(\xi_x)(\rEarth + h)},
   \end{equation}
   where $\xi_x$ is calculated similarly to \eqref{yhtaloryhma} by interpreting $x=\textit{o}_E$, \textcolor{black}{and the distance $d_{x}$ can be calculated accordingly.} For a general beam, the antenna angles can be calculated by first using a homography function, which maps the UEs into the nadir-pointing beam, and then using \eqref{eq:varphixxxx}.

   \section{LARGEST GAIN}

   \label{appendix:largestgain}
   Let $\px_{0} \in \mathcal{G}_1 \subset (0,1)$ be the largest gain. For convenience, assume a single access network ($p=1$). The PDF of $\px_0$ is determined by the void probability of the PPP:
   \begin{align}
     \label{eq:nthstrongestpdfY}
     \frac{d}{d x}\mathbb{P}(\px_{0} < x) &= \frac{d}{d x}  \exp\left\{-\int_{x}^{1}\lambda_{\mathcal{G}_1}(r)dr\right\}\nonumber \\
     &=\frac{d}{d x}\exp\left\{-\int_{x}^{1}\tilde{\kappa}/r dr\right\}  
       = \frac{d}{d x}x^{\tilde{\kappa}}= \tilde{\kappa}x^{\tilde{\kappa}-1}. 
   \end{align}
   The Laplace transform of $1/\px_0$ is
   \begin{equation}
     \mathcal{L}_{1/\px_0}(s) = \mathbb{E}(e^{-s/\px_0 }) = \tilde{\kappa}\int_0^1 x^{\tilde{\kappa}-1} e^{-s/x}dx = \tilde{\kappa} E_{\tilde{\kappa}+1}(s).
   \end{equation}

   \bibliographystyle{IEEEtran}
   \bibliography{IEEEabrv,source}

   \begin{IEEEbiography}[{\includegraphics[width=1in,height=1.25in,clip,keepaspectratio]{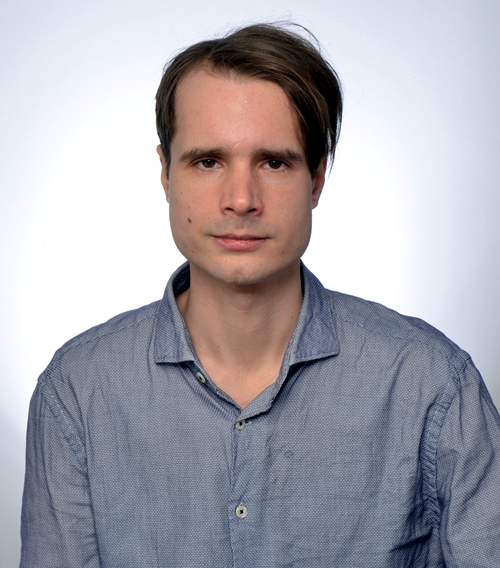}}]{Ilari Angervuori} Ilari received the B.Sc. and M.Sc. degrees in applied mathematics from the University of Helsinki, Finland, in 2016 and 2018, respectively.

     Since 2018, he has been with the School of Electrical Engineering, Aalto University, initially as a research assistant and then as a Ph.D. degree candidate under the supervision of Prof. Risto Wichman. His Ph.D. research subject is stochastic geometry in LEO networks, and he is one of the pioneering authors in stochastic geometry modeling of LEO, specializing primarily in the uplink. In academic research, he thrives on tractable system models and clear analytical insights.  From the summer of $2023$ to the winter of $2024$, he worked and published at the University of Notre Dame under the supervision of Prof. Martin Haenggi. He is an active peer reviewer in multiple IEEE journals.
 
     Mr. Angervuori holds an M.Sc. thesis approved with exceptional praise and is a member of the University of Helsinki Alumni Association.
   \end{IEEEbiography}

   \begin{IEEEbiography}[{\includegraphics[width=1in,height=1.25in,clip,keepaspectratio]{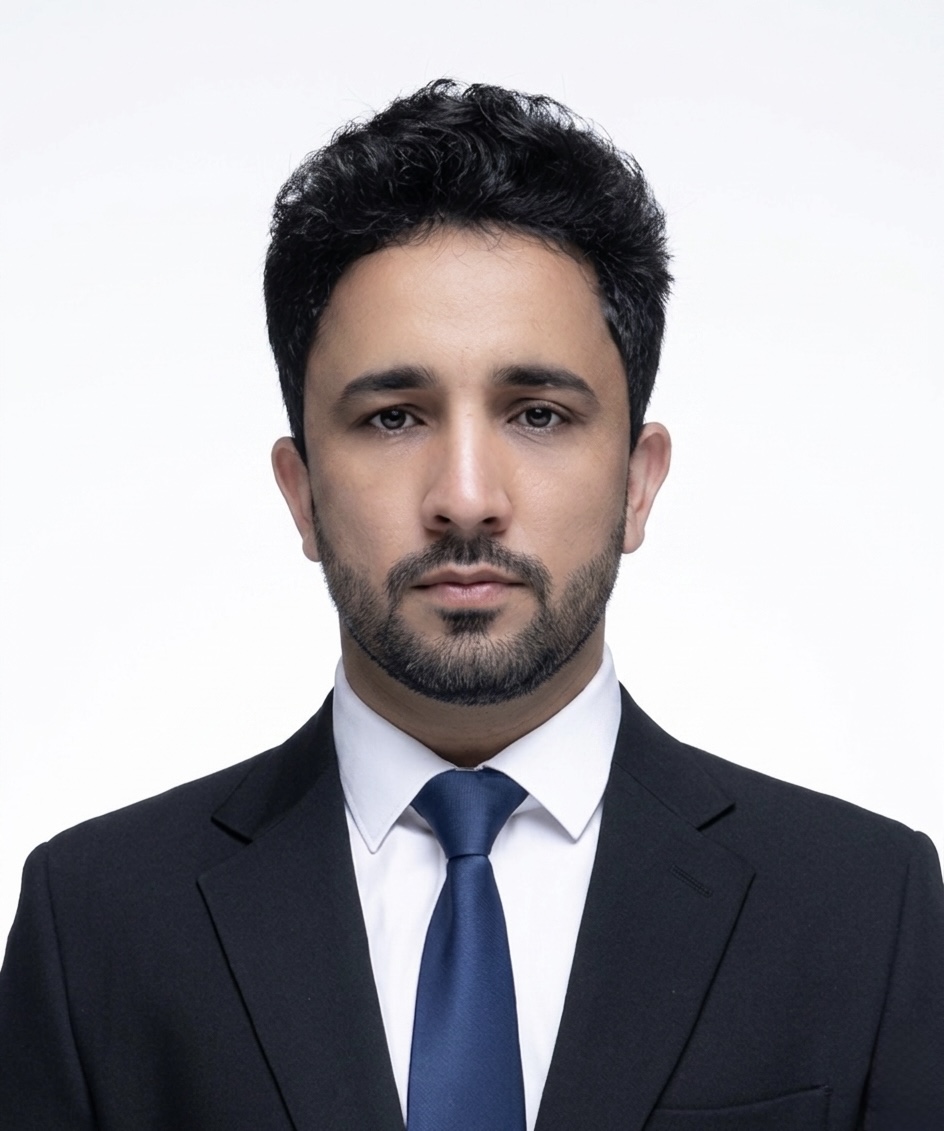}}]{Abid Afridi}

     Abid Afridi is currently a Marie Skłodowska-Curie Actions (MSCA) Doctoral Fellow with the Department of Information and Communications Engineering, Aalto University, Finland.

     He received the B.E. degree in Electrical Engineering (Telecommunication) from the University of Science and Technology Bannu, Pakistan, in 2020, and the Master’s degree in Electrical Engineering from the University of Ulsan, South Korea, in 2025, where he was a recipient of the Brain Korea 21 Plus (BK21+) Scholarship. From 2022 to 2025, he served as a Research Assistant with the Department of Electrical, Electronic and Computer Engineering, University of Ulsan. His research interests include sixth-generation (6G) non-terrestrial networks (NTNs), network slicing, resource management, wireless-powered Internet of Things (IoT) networks, and physical-layer security.

   \end{IEEEbiography}

   \begin{IEEEbiography}[{\includegraphics[width=1in,height=1.25in,clip,keepaspectratio]{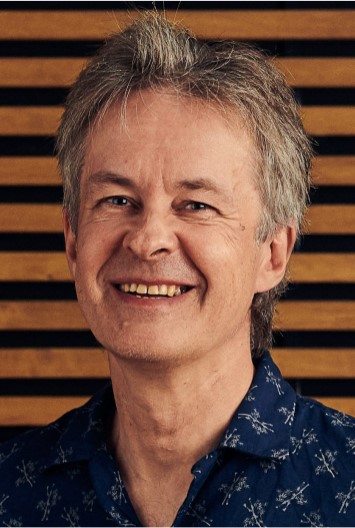}}]{Risto Wichman} received his M.Sc. and D.Sc. (Tech) degrees in electrical engineering from Tampere University of Technology, Finland, in 1990 and 1995, respectively.

     From 1995 to 2001, he worked at Nokia Research Center as a senior research engineer. In 2002, he joined the Department of Information and Communications Engineering, Aalto University School of Electrical Engineering, Finland, where he has been a full professor since 2008. His research interests lie in the physical layer of wireless communication systems, specifically in signal processing and communication theory.

   \end{IEEEbiography}

\EOD

\end{document}

%% file: testmathcommand.tex
\chardef\bslash=`\\ 

\newtheorem{thm}{Theorem}
\newtheorem{cor}[thm]{Corollary}
\newtheorem{lem}[thm]{Lemma}
\newtheorem{prop}[thm]{Proposition}

\newtheorem{example}{Example}

\theoremstyle{definition}

\theoremstyle{remark}
\newtheorem{rem}{Remark}

\newcommand{\eval}[2][\right]{\relax
  \ifx#1\right\relax \left.\fi#2#1\rvert}

